\newtheorem{lemma}{Lemma}
\newtheorem{theorem}{Theorem}
\newtheorem*{theorem*}{Theorem}
\newtheorem{corollary}{Corollary}
\newtheorem{definition}{Definition}
\newtheorem*{problem*}{}
\newcommand{\NP}{\mathsf{NP}}
\newcommand{\DTIME}{\mathsf{DTIME}}
\newcommand{\FF}{\mathbf{F}}
\newcommand{\NCP}{\textsc{NCP}}
\newcommand{\CosetHeavy}{\textsc{Coset\-Heavy}}
\newcommand{\MultGapNCP}{\textsc{Mult\-Gap\-NCP}}
\newcommand{\OptHalfSparsifier}{\textsc{Opt\-Half\-Sparsifier}}
\newcommand{\ApproxOHS}{\textsc{Approx\-OHS}}
\newcommand{\calC}{\mathcal{C}}
\newcommand{\wt}{\text{wt}}
\newcommand{\minwt}{\text{minwt}}
\title{On the Hardness of the One-Sided Code Sparsifier Problem}
\date{\today}
\author{
Elena Grigorescu\thanks{David R. Cheriton School of Computer Science, 
University of Waterloo, Canada. 
 \url{elena-g@uwaterloo.ca}} 
\and
Alice Moayyedi \thanks{David R. Cheriton School of Computer Science, 
University of Waterloo, Canada. 
\url{anelosima@proton.me}}}
\begin{document}

\maketitle

\begin{abstract}
    The notion of \emph{code sparsification} was introduced by Khanna, Putterman and Sudan (SODA 2024), as an analogue to the the more established notion of cut sparsification in graphs and hypergraphs. In particular, for $\alpha\in (0,1)$ an (unweighted) one-sided $\alpha$-sparsifier for a linear code $\calC \subseteq \FF_2^n$ is a subset $S\subseteq [n]$ such that the weight of each codeword projected onto the coordinates in $S$ is preserved up to an $\alpha$ fraction.  Recently, Gharan and Sahami (arxiv.2502.02799) show the existence of one-sided $\frac12$-sparsifiers of size $n/2+O(\sqrt{kn})$ for any linear code,  where $k$ is the dimension of $\calC$. In this paper, we consider the computational problem of finding a one-sided $\frac12$-sparsifier of minimal size, and show that it is NP-hard, via a reduction from the classical nearest codeword problem. We also show hardness of approximation results.
\end{abstract}

\section{Introduction}\label{sec:intro}

For $\alpha\in (0,1)$, a one-sided $\alpha$-\emph{sparsifier} for a code $\calC \subset \FF_2^n$ is a set $S\subseteq [n]$ such that the projection of any codeword $c \in \calC$ onto $S$ results in a vector $c_S$ whose weight is preserved up to an $\alpha$-factor, namely that $wt(c_S)\geq \alpha \cdot wt(c)$. Here $wt(c)=|\{i \mid c_i\ne 0\}|$.

The notion of (weighted) two-sided code sparsifiers was recently introduced by Khanna, Putterman, and Sudan~\cite{KhannaPS24}, as an analogue of the notion of cut sparsifiers for graphs and hypergraphs~\cite{Karger94}, which have been studied even more broadly in the context of constraint satisfaction problems~\cite{KoganK15}. Recently, Gharan and Sahami~\cite{GharanSahami2025} study the unweighted one-sided code sparsifier problem for linear codes (i.e.\ subspaces). They give an elegant short proof of the existence of one-sided $\frac12$-sparsifiers of size $n/2+O(\sqrt{kn})$, where $k=\log|C|$ is the dimension of $C$.

Here we study the computational problem of finding one-sided unweighted $\frac12$-sparsifiers, defined as follows.

\begin{quote}
    \underline{Minimal One-Sided $\frac12$-Sparsifier Problem ($\OptHalfSparsifier$)}\\
    \textbf{Instance:} A linear code $\calC$,  given by its generators.\\
    \textbf{Output:} A set $S \subseteq [n]$ such that:
    \begin{itemize}
    \item (Feasibility) for all $c \in \calC$, $\wt(c_S) \geq \frac{1}{2}c$;
    \item (Optimality) $S$ is of smallest size among all sets that satisfy the above.
    \end{itemize}
\end{quote}

To the best of our knowledge, the complexity of finding minimum-sized sparsifiers has not been studied before. Here we show the following hardness results.

\begin{theorem}[
        restate=thmmain,
        name=Hardness of $\OptHalfSparsifier$
        ]\label{thm:main}
    $\OptHalfSparsifier$ is $\NP$-hard.
\end{theorem}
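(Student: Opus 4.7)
I plan to reduce from the Nearest Codeword Problem ($\NCP$), which is classically $\NP$-hard: given a linear code $\calC_0 \subseteq \FF_2^n$ with generators, a target $y \in \FF_2^n$, and a threshold $t$, decide whether the coset $y + \calC_0$ contains a vector of weight at most $t$. The aim is to build, in polynomial time, a linear code $\calC'$ and an integer $k'$ so that $\calC'$ admits a one-sided $\frac12$-sparsifier of size at most $k'$ if and only if the $\NCP$ instance is a YES.

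The first step is a reformulation. Writing $\overline{S} = [N] \setminus S$, the sparsifier condition $\wt(c_S) \geq \wt(c)/2$ is equivalent to $\wt(c_{\overline{S}}) \leq \wt(c)/2$. Thus $\OptHalfSparsifier$ asks for the largest $\overline{S}$ such that no codeword is more than half supported on $\overline{S}$. In particular, verifying a candidate $S$ is the intermediate problem $\CosetHeavy$: does some codeword concentrate more than half of its weight on a given subset? This is the lever through which $\NCP$ enters.

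For the construction I would augment $\calC_0$ by introducing $y$ as an additional generator and appending a tail of $T$ fresh coordinates, with $T$ polynomial in $n$. Concretely, $\calC' \subseteq \FF_2^{n+T}$ is generated by $(g_i, 0^T)$ for each generator $g_i$ of $\calC_0$ together with $(y, 1^T)$. Its codewords split into $(c, 0^T)$ with $c \in \calC_0$ of weight $\wt(c)$, and $(c+y, 1^T)$ with $c \in \calC_0$ of weight $\wt(c+y) + T$. Writing any sparsifier $S$ as $S_1 \cup S_2$ with $S_1 \subseteq [n]$ and $S_2 \subseteq \{n+1, \ldots, n+T\}$, the $(c+y, 1^T)$ codewords force $|S_2|$ to be close to $T/2$, while any deficit in the tail must be absorbed in $S_1$ at a cost proportional to the coset minimum weight $d^* := \min_{c \in \calC_0} \wt(c+y)$. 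Choosing $k'$ to lie between the optimum attained when $d^* \leq t$ and the optimum when $d^* > t$ completes the reduction.

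The principal obstacle is to decouple the two families of codewords. The constraints coming from the $(c, 0^T)$ codewords depend only on the structure of $\calC_0$ (not on $y$) and could easily overwhelm the dependence of $|S|$ on $d^*$. To handle this I would pre-process $\calC_0$ via a gadget — for instance, concatenation with a high-minimum-distance outer code, or a blow-up/replication of coordinates — so that the sparsifier contribution from the $(c,0^T)$ codewords becomes a predictable function of $n$, the dimension $k$, and the gadget's parameters, leaving $d^*$ as the sole remaining source of variation in the optimum. Once this decoupling is in place and $\NP$-hardness is established, running the same construction with $\MultGapNCP$ in place of $\NCP$, routed through the intermediate $\CosetHeavy$ problem, would then yield the claimed hardness-of-approximation statement for $\ApproxOHS$.
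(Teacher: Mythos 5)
Your reduction is a many-one (Karp) reduction building a single instance of $\OptHalfSparsifier$, whereas the paper uses a polynomial-time Turing reduction with iterated oracle calls. This is not a cosmetic difference: there is a structural obstruction that your construction, as described, does not overcome, and it is precisely this obstruction that forces the paper into a Turing reduction.

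The paper's structural theorem (Theorem~\ref{thm:structure}) shows that the optimal $\frac12$-thick string $h^*$ is the one whose complement $\bar{h^*}$ is a minimum-weight representative of the coset of \emph{maximum} distance to the code --- a deepest hole, i.e.\ a point achieving the covering radius. So knowing the size of the optimal sparsifier (equivalently, deciding ``is $\mathrm{OPT} \le k'$?'') only reveals the maximum depth $\max_{A}\minwt(A)$ over all cosets $A$, not the depth of the particular coset $y + \calC_0$ you care about. In your construction, working this out explicitly: the cosets of $\calC'$ are indexed by a coset $A$ of $\calC_0$ and a tail weight $w\in\{0,\dots,T\}$, with minimum weight $\min\bigl(d(A)+w,\ d(A+y)+T-w\bigr)$, so the maximum depth of any coset of $\calC'$ is $\tfrac{T}{2}+\tfrac12\max_A\bigl(d(A)+d(A+y)\bigr)$ up to rounding. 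Since $d(A)+d(A+y) \ge d(\calC_0)+d(y+\calC_0)=d^*$ by taking $A=\calC_0$ but may be strictly larger for other $A$, your optimum records $\max_A\bigl(d(A)+d(A+y)\bigr)$ rather than $d^*$, and these two quantities are not interchangeable. Crucially, the gadgets you propose to "decouple" the two codeword families (outer-code concatenation, coordinate replication) scale all coset depths uniformly and therefore cannot suppress the contribution of cosets $A\ne\calC_0$ to this maximum.

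The paper's algorithm gets around exactly this: when the deepest hole returned by the oracle is not in $s$'s coset, it \emph{adds} the vector $\bar{h^*_i}+s$ to the generator matrix. This merges the deepest coset with the coset of $s$, preserving $s$'s distance to the code while eliminating an extraneous deep coset; after at most $n-\dim\calC$ iterations the deepest hole must coincide with $s$'s coset and the answer can be read off. A single Karp-reduction instance has no way to perform this elimination, so the missing idea in your proposal is not a better gadget for the $(c,0^T)$ constraints but rather a mechanism to make the coset of $y$ the extremal one --- something the iterated Turing reduction does and a one-shot construction does not. Separately, you cite $\CosetHeavy$ as "verifying a candidate $S$," but verifying a sparsifier is a coNP-style universal condition, and the paper's $\CosetHeavy$ is the different (existential) question of whether a \emph{given} coset contains a high-weight element; conflating these obscures the actual role $\NCP$ plays, which enters via Theorem~\ref{thm:structure} rather than via certificate verification.
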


In fact, our results hold more generally:

\begin{theorem}[
        restate=thmapprox,
        name=Approximation Hardness of $\OptHalfSparsifier$]\label{thm:approx}
    Let $S^*\subseteq[n]$ be a one-sided $\frac12$-sparsifier of minimal size for a linear code $\calC$. The problem of finding a one-sided $\frac12$-sparsifier $S \subseteq [n]$ for $\calC$ such that $\gamma \cdot |\bar{S}| \geq |\bar{S^*}|$ (where $\bar{S}=[n]/S$) is:
    \begin{itemize}
    \item $\NP$-hard for any constant $\gamma \geq 1$;
    \item impossible to solve in polynomial time for any constant $\epsilon > 0$, assuming $\NP \nsubseteq \DTIME(2^{\log^{O(1)}n})$ and with $\gamma = 2^{\log^{1-\epsilon}n}$;
    \item impossible to solve in polynomial time for some constant $c > 0$, assuming $\NP \nsubseteq \bigcap_{\delta > 0} \DTIME(2^{n^\delta})$ and with $\gamma = n^{c/\log\log n}$.
    \end{itemize}
\end{theorem}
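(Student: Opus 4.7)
The plan is to exhibit a polynomial-time gap-preserving reduction from the multiplicative-gap nearest codeword problem $\MultGapNCP$ to $\OptHalfSparsifier$. The three parts of the theorem then follow by plugging in the three known hardness regimes for $\MultGapNCP$: $\NP$-hardness for any constant factor (Arora--Babai--Stern--Sweedyk); inapproximability within a factor $2^{\log^{1-\epsilon} n}$ under $\NP \not\subseteq \DTIME(2^{\log^{O(1)} n})$; and inapproximability within $n^{c/\log\log n}$ under $\NP \not\subseteq \bigcap_{\delta > 0} \DTIME(2^{n^\delta})$ (Dumer--Micciancio--Sudan and follow-ups).

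I plan to bridge the two problems through an auxiliary decision problem $\CosetHeavy$: given a linear code $\calC \subseteq \FF_2^n$ and $T \subseteq [n]$, decide whether some nonzero codeword $c \in \calC$ satisfies $|\text{supp}(c) \cap T| > \wt(c)/2$. The connection to $\OptHalfSparsifier$ is that $[n] \setminus T$ is a feasible one-sided $\tfrac12$-sparsifier for $\calC$ exactly when $\CosetHeavy(\calC, T)$ answers NO, making $\CosetHeavy$ the natural feasibility test for candidate sparsifier complements. The connection to $\MultGapNCP$ comes from the identity
\[ \wt(c + y) = \wt(c) + \wt(y) - 2\,|\text{supp}(c) \cap \text{supp}(y)|. \]
Setting $T = \text{supp}(y)$, the existence of a codeword $c$ with small $\wt(c+y)$ translates to the existence of $c$ with a large fractional intersection with $T$. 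A standard padding of $\calC$ and $y$ with controlled-weight bits lets me tune the relevant threshold to $1/2$ while preserving the $\MultGapNCP$ multiplicative gap, giving the first leg $\MultGapNCP \to \CosetHeavy$.

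The technical heart of the proof is the second leg, $\CosetHeavy \to \OptHalfSparsifier$. Given an instance $(\calC,T)$, I construct a larger code $\calC'$ whose optimum sparsifier is forced to have the form ``$\bar S \subseteq T$ (plus replicas)'', so that the sparsifier optimization becomes a check of whether $T$ is a valid anti-sparsifier for $\calC$. I envisage implementing this by replicating each coordinate of $S = [n] \setminus T$ a parameter $m$ times (and adding auxiliary forcing generators if needed), so that excluding any $S$-coordinate costs $\Theta(m)$ units of codeword weight. With $m$ sufficiently large, every near-optimal sparsifier retains essentially all of $S$ together with its replicas, and the remaining optimization over $T$-coordinates reduces to $\CosetHeavy$ on $(\calC,T)$. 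A secondary layer of $T$-coordinate replication then converts the resulting additive YES/NO gap into the multiplicative gap on $|\bar S^*|/|\bar S|$ required by the theorem.

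The main obstacle I anticipate is the quantitative gap analysis of this second leg: I must rule out that a clever sparsifier trades some $S$-coordinates against large blocks of $T$-coordinates in a way that circumvents the forcing mechanism. My plan is to phrase sparsifier feasibility as a family of linear inequalities indexed by the codewords of $\calC$, bound the value of the LP relaxation from both sides, and argue that any sparsifier close to optimum must agree with the intended structure up to a controlled slack. Composing this reduction with the three $\MultGapNCP$ hardness regimes above then yields each of the three claims of \Cref{thm:approx}.
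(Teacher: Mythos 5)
Your proposal diverges from the paper's proof in a fundamental structural way, and as written it has a genuine gap.

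The paper establishes \cref{thm:approx} by reusing the \emph{Turing} reduction from \cref{thm:main} (Algorithm~1) essentially unchanged: it iteratively calls the sparsifier oracle, augments the generator matrix, and tracks a slightly relaxed invariant --- rather than insisting that no $\calC_i$ contains a codeword closer to $s$ than $\calC$ does, it only insists that no new codeword of distance less than $k$ appears. That weaker invariant is exactly what a $\gamma$-approximate sparsifier oracle can guarantee, so the gap of $\MultGapNCP_\gamma$ is preserved with almost no additional work. In contrast, you propose a single-call, Karp-style, gap-preserving reduction routed through an auxiliary decision problem (your $\CosetHeavy$, which is a feasibility test for a given anti-sparsifier $T$ and is \emph{not} the same problem as the paper's $\CosetHeavy$). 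That is a genuinely different strategy.

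The difficulty is that you never actually close the second leg, and I do not see how to make it close as described. Your $\CosetHeavy(\calC,T)$ is a YES/NO question: does there exist a codeword with more than half its support in $T$? A NO answer certifies $[n]\setminus T$ feasible, so $|\bar{S^*}|\le |T|$. But a YES answer gives no lower bound at all on $|\bar{S^*}|$: it tells you $T$ itself fails, not that every comparably large anti-sparsifier fails. To derive hardness of the \emph{optimization} problem with a multiplicative factor $\gamma$ on $|\bar S|$, you need the YES case of $\MultGapNCP_\gamma$ to force $|\bar{S^*}|$ to be large by a factor $\gamma$, and that information is lost the moment you collapse the instance to a single $\CosetHeavy$ bit. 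The ``secondary layer of $T$-coordinate replication'' is advertised as fixing this, but replicating $T$-coordinates scales $|T|$ and $|\bar{S^*}|$ by the same factor, so it cannot manufacture a multiplicative separation that is not already present. You would also need to rule out the obstacle you yourself flag --- near-optimal sparsifiers trading a few $S$-coordinates against many $T$-coordinates --- and you offer only an LP-relaxation plan, not an argument. In short, the reduction chain $\MultGapNCP \to \CosetHeavy \to \OptHalfSparsifier$ drops the gap at the intermediate decision problem, and the gadget constructions needed to restore it are left unproved. The paper sidesteps all of this by staying with the adaptive Turing reduction, which naturally propagates the gap parameter $k$ through each iteration.
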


Our proofs show Turing reductions from the fundamental problem of computing a nearest codeword to a received string, as defined below.

\begin{quote}
    \underline{Nearest Codeword Problem (\NCP)}\\
    \textbf{Instance:} A linear code $\calC$ given by its generators, a received string $s \in \FF_2^n$, and an integer $k$.\\
    \textbf{Output:} (YES) if there exists $c \in \calC$ such that $\wt(c + s) \leq k$, and (NO) otherwise.
\end{quote}

The NP-hardness of the nearest codeword problem was first shown by Vardy~\cite{vardy1997intractability}, followed by proofs by Dumer, Miccancio, and Sudan~\cite{dumer2003approx} of the hardness of the problem for promise additive and multiplicative approximation versions, under RUR reductions. A sequence of follow-ups~\cite{Khot05,AustrinKhot,Micciancio2014, 
BhattiproluLee24, bhattiproluGLR2025} have now established that the multiplicative variant is NP-hard under deterministic Karp reductions.

\subsection{Preliminaries}\label{sec:prelims}

Let $\calC$ be a \emph{linear} code over $\FF_2^n$; that is, $\calC$ is a subset of $\FF_2^n$ such that $c, c' \in \calC \Rightarrow c + c' \in \calC$. We may define a linear code $\calC$ as the span of the columns of a matrix $M$; in this case, we call $M$ the \emph{generator matrix} of $\calC$, and we call the columns of $M$ the \emph{generators} of $\calC$. We say that $\calC$ has \emph{dimension} $k$ if $|\calC| = 2^k$. For $h\in\FF_2^n$, $h+\calC$ is called and \emph{affine} subspace, or a \emph{coset} of $\calC$ in $\FF_2^n$. 

We denote by ${\bf 0}$ and ${\bf 1}$ the all-zeroes and the all-ones vectors, respectively, in $\FF^n$.

We define the \emph{weight} of a string $s \in \FF_2^n$, $\wt(s)$, as the number of nonzero coordinates of $s$. For a set of coordinates $S \subseteq [n]$ and a string $c \in \FF_2^n$, we $c_S$ as the projection of $c$ onto the coordinates in $S$.

\begin{definition}
A set $S \subseteq [n]$ is \textbf{$\alpha$-thin} with respect to $\calC$ if for every codeword $c \in \calC$,

\[\wt(c_S) \leq \alpha \cdot \wt(c).\]
\end{definition}

Likewise, we call a set $S$ \textbf{$\alpha$-thick} with respect to $\calC$ if $\wt(c_S) \geq \alpha \cdot \wt(c)$ for all $c \in \calC$. If we define identify the set $S$ with its indicator vector $s \in \FF_2^n$, the weight $\wt(c \circ s)$ is equal to the weight $\wt(c_S)$, where $\circ$ is the Schur or element-wise product. We say that a string $s$ is $\alpha$-thin or $\alpha$-thick with respect to a code $\calC$ exactly when its corresponding set $S$ is. Note that the complement of an $\alpha$-thin set or string is an $\alpha$-thick set or string, and vice versa. We denote by ${\bar S}$ the complement of the set $S$ in $[n]$.

The terminology \textit{$\alpha$-thin} is by analogy to $\alpha$-thinness in the context of graphs. An $\alpha$-thin subgraph (usually, a tree) of a graph $G$ is a subgraph $T \subseteq G$ such that, for any cut $\delta(S)$ of $G$, the number of edges in $T$ which are in $\delta(S)$ is at most an $\alpha$ fraction of the total number of edges in $\delta(S)$. This notion corresponds directly to the above definition of $\alpha$-thinness in linear codes via the following relation:

Given a graph $G = (V,E)$, we define the incidence matrix $M$ as a $|V|\times |E|$ matrix such that for $v \in V, e \in E$,

\[M_{i,e} =
    \begin{cases}
        1&\text{if } v\in e\\
        0&\text{if } v\notin e
    \end{cases}
\]

That is, $M_{v,e} = 1$ exactly when $e$ is incident on $v$, and $M_{v,e}=0$ otherwise. If we use $M$ as the generator matrix for a linear code $C(M)$, then the $\alpha$-thin subgraphs of $G$ correspond to the $\alpha$-thin sets of $E$ with respect to $C(M)$. If we define $\alpha$-thick subgraphs as the complements of $\alpha$-thin subgraphs, these also correspond to the $\alpha$-thick sets of $E$.

There is also a direct correspondence between this notion of $\alpha$-thickness and the notion of a one-sided $\alpha$-sparsifier. If a set is $\alpha$-thick with respect to a code $\calC$, it is also a one-sided $\alpha$-sparsifier for that code, and vice versa.

In \cref{sec:main}, we use the following hardness of approximation theorem for the NCP problem~\cite{vardy1997intractability,dumer2003approx,Khot05,
AustrinKhot,Micciancio2014,BhattiproluLee24, bhattiproluGLR2025}.

\begin{theorem}\label{thm:ncphard}
Given an affine subspace $V \subseteq \FF_2^n$ and an integer $k>0$, there is no polynomial-time algorithm which distinguishes between the following cases:

\begin{itemize}
    \item (YES) there exists $x \in V$ with $\wt(x) \leq k$.
    \item (NO) for all $x \in V$, it is the case that $wt(x) \geq \gamma \cdot k$.
\end{itemize}

\begin{enumerate}
    \item when $\gamma>1$ is a constant, assuming $P\neq \NP$
    \item when $\gamma=2^{\log^{1-\epsilon}n}$,  for any $\epsilon$, assuming $\NP\not \subseteq \DTIME(2^{\log^{O(1)}n})$
    \item when  $\gamma = n^{c/\log\log n}$, for some $c>0$, assuming 
    $\NP \nsubseteq \bigcap_{\delta > 0} \DTIME(2^{n^\delta})$ and with $\gamma = n^{c/\log\log n}$.
\end{enumerate}
\end{theorem}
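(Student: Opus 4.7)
The plan is to prove the NCP hardness-of-approximation theorem by reducing from the Minimum Label Cover problem, which enjoys exactly the three-tier approximation hardness structure stated here. A Label Cover instance consists of a bipartite graph $G=(U,V,E)$, a label alphabet $[L]$, and projection constraints $\pi_{uv}\colon[L]\to[L]$; the goal is to assign labels to vertices so that $\pi_{uv}(\ell(u))=\ell(v)$ for as many edges as possible. The three gaps in the theorem match, respectively: constant gap via the PCP theorem; $2^{\log^{1-\epsilon}n}$ gap via $O(\log^{1-\epsilon}n)$-fold parallel repetition (Raz) of a constant-gap Label Cover instance, which blows up the size to $n^{\mathrm{polylog}(n)}$ and hence requires the stronger assumption $\NP\not\subseteq\DTIME(2^{\log^{O(1)}n})$; and $n^{c/\log\log n}$ gap via Moshkovitz--Raz style PCP constructions, whose near-polynomial blowup explains the $\bigcap_{\delta}\DTIME(2^{n^\delta})$ assumption.

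The reduction itself builds a binary linear code $\calC$ and received string $s$ from a Label Cover instance as follows. For each vertex $v\in U\cup V$ I introduce a block of coordinates that encodes the label of $v$ via a long code (or Hadamard code) gadget, which is needed because the Label Cover alphabet $[L]$ is non-binary whereas $\calC\subseteq\FF_2^n$. For each edge $(u,v)\in E$ I append ``consistency'' coordinates whose values, under the natural correspondence between codewords and labelings, vanish exactly when $\pi_{uv}(\ell(u))=\ell(v)$. The generators of $\calC$ are chosen so that the affine subspace $s+\calC$ is in bijection with all possible labelings, and $\wt(s+c)$ equals (up to replication weights I will pick) the number of unsatisfied edges in the labeling corresponding to $c$. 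Choosing the repetition weights carefully ensures that the gap in Label Cover transfers linearly to the gap in NCP.

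The main obstacle I expect is in the design of the per-vertex gadget so that the overall construction is genuinely linear over $\FF_2$ while still faithfully encoding labelings: the linear structure of $\calC$ must not introduce low-weight ``phantom'' codewords that correspond to no valid labeling but still make $\wt(s+c)$ small. The long-code / Hadamard trick handles the alphabet issue but must be combined with a careful choice of constraint gadget so that every low-weight element of $s+\calC$ can be decoded back to a labeling satisfying many constraints. A secondary issue is derandomization: the original Dumer--Micciancio--Sudan reduction is randomized (RUR), and the deterministic Karp reduction achieving these parameters is the culmination of the sequence \cite{Khot05,AustrinKhot,Micciancio2014,BhattiproluLee24,bhattiproluGLR2025}, which I would cite rather than reprove in full. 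The role of my sketch is therefore to describe the natural reduction template and point out where each prior work plugs into it; with that template in place, the three regimes of \cref{thm:ncphard} follow by substituting the corresponding gap-Label-Cover hardness into the reduction.
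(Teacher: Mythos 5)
This statement is not proved in the paper at all: \cref{thm:ncphard} is imported as a black-box result from the cited literature (Arora--Babai--Stern--Sweedyk-style reductions and their successors), and the paper's ``proof'' consists entirely of the citation list. Your proposal, judged as a proof, has genuine gaps: the heart of any such argument is precisely the per-vertex gadget and the elimination of low-weight ``phantom'' codewords that do not decode to labelings, and you name this obstacle without resolving it. The weight-replication scheme that makes the Label Cover gap transfer to the NCP gap is also left unspecified, and that is where the actual technical content of these reductions lives. So the template you describe is a plausible shape for such a proof, but it is not one.

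That said, since you explicitly defer to the cited works for the real argument --- exactly as the paper does --- the practical outcome is the same: the theorem is used as a known result, not reproved. Two remarks on your framing. First, the three-tier structure (constant gap from the PCP theorem, $2^{\log^{1-\epsilon}n}$ from parallel repetition under a quasi-polynomial-time assumption, $n^{c/\log\log n}$ under a subexponential-time assumption) is indeed the correct correspondence between gap regimes and complexity assumptions, and that part of your intuition is sound. Second, be aware that the primary references for hardness of the nearest-codeword / shortest-vector-in-an-affine-subspace problem in these regimes are not quite the same as those for the minimum distance problem; the long-code machinery you invoke is characteristic of the latter line of work, whereas the affine (NCP) hardness admits more direct reductions. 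If you intend to rely on this theorem, cite it and move on, as the paper does; if you intend to prove it, the gadget construction you postpone is the entire proof.
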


\subsection{Organization}\label{sec:organization}
In \cref{sec:structure} we give a structural theorem of $\frac12$-thick sets. We use this to show that the problem of finding $\frac12$-thick sets is strongly related to the nearest codeword problem. In \cref{sec:main} we reduce $\NCP$ to the problem of finding optimal $\frac12$-thick sets (equivalently, optimal one-sided $\frac12$-sparsifiers) through this relationship, proving the main theorems.

\section{Representatives of Largest Weight}\label{sec:structure}
In this section we characterise $\frac12$-thick sets, and connect them to the nearest codewords of elements in the same coset.

For a linear code $\calC \subseteq \FF_2^n$, we consider the quotient space $\FF_2^n/\calC$. The equivalence classes or cosets $H \in \FF_2^n/\calC$ are the sets of strings such that $h, h' \in H \Rightarrow h + h' \in \calC$ and $h \in H, s \notin H \Rightarrow h + s \notin \calC$. For a given coset $H$, we define the set $H^* \subseteq H$ to be the set of strings in $H$ of largest weight; $H^* = \{h^* \in H : \wt(h^*) = \max\limits_{h\in H}\wt(h)\}$. Our first theorem is a characterisation of the $\frac12$-thick strings in a given coset:

\begin{theorem}\label{thm:structure}
    For a member $h$ of a coset $H \in \FF_2^n/\calC$, the following are equivalent:
    \begin{enumerate}
    \item $h$ is among the elements of $H$ of greatest weight;
    \item $h$ is $\frac12$-thick with respect to $\calC$;
    \item The all-zeroes string, $\mathbf{0}$, is a nearest codeword to the complement of $h$, $\bar{h}$ (that is, $h + \mathbf{1}$).
    \end{enumerate}
\end{theorem}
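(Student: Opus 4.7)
The plan is to reduce all three equivalences to two simple $\FF_2$ weight identities, proving (1)$\iff$(2) and (1)$\iff$(3) separately. No clever construction is needed; the entire argument is algebraic bookkeeping.

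The main identity is
\[\wt(h+c) = \wt(h) + \wt(c) - 2\wt(h \circ c),\]
valid for any $h,c\in\FF_2^n$ (by case analysis on each coordinate: positions where $h_i=c_i=1$ contribute $0$ to $\wt(h+c)$ but $2$ to the raw sum $\wt(h)+\wt(c)$, while positions where exactly one is $1$ contribute equally to both sides). Using this, (1)$\iff$(2) is immediate: $h$ has maximal weight in its coset iff $\wt(h+c)\le\wt(h)$ for every $c\in\calC$, which after substitution becomes $\wt(h\circ c)\ge \tfrac12 \wt(c)$ for every $c\in\calC$. Identifying $h$ with its support set $S$, so that $\wt(c_S)=\wt(c\circ h)$, this is exactly the definition of $h$ being $\tfrac12$-thick with respect to $\calC$.

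For (1)$\iff$(3), I would use the complementation identity $\wt(\bar{h}+c) = n - \wt(h+c)$, which follows from $\bar{h}+c = h+c+\mathbf{1}$ and the fact that adding the all-ones vector flips every bit. Consequently $\wt(h+c)\le\wt(h)$ for every $c\in\calC$ iff $\wt(\bar{h}+c)\ge\wt(\bar{h})$ for every $c\in\calC$, which says precisely that $\mathbf{0}$ is a nearest codeword to $\bar{h}$. I do not expect a real obstacle; the only subtlety is keeping the two roles of $h$ straight---as an element of $\FF_2^n$ when computing $\wt(h+c)$, and as the indicator of a subset of $[n]$ when invoking the $\tfrac12$-thickness definition---together with the corresponding switch between the coset $H$ and the translated coset $H+\mathbf{1}$ containing $\bar{h}$.
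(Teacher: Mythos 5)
Your proof is correct and follows essentially the same decomposition as the paper (proving $1\Leftrightarrow 2$ and $1\Leftrightarrow 3$ separately via the relationship between $\wt(h+c)$, $\wt(h)$, and $\wt(c\circ h)$). The one genuine improvement is that by making the identity $\wt(h+c)=\wt(h)+\wt(c)-2\wt(h\circ c)$ explicit, you turn the paper's pair of contradiction arguments in each lemma into a single clean chain of biconditionals, which is tighter and leaves less room for sign errors.
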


We prove \cref{thm:structure} in two parts: \cref{lem:12eq} (equivalence of 1.\ and 2.) and \cref{lem:13eq} (equivalence of 1.\ and 3.).

We say that $c \in \calC$ is a nearest codeword to a string $s \in \FF_2^n$ when the Hamming distance between $s$ and $c$, $\wt(s + c)$, is minimal among all elements of $\calC$. In general, we denote the complement of a string $s$ as $\bar{s}$, and the all-zeroes and all-ones strings as $\mathbf{0}$ and $\mathbf{1}$ respectively. Note that $\bar{s} = s + \mathbf{1}$.

\begin{lemma}\label{lem:12eq}
    1.\ and 2.\ above are equivalent.
\end{lemma}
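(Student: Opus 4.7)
The plan is to rewrite both conditions in terms of a single quantity, the weight of the Schur product $\wt(h \circ c)$, and observe that they match exactly. The key tool is the standard identity, valid for any $h, c \in \FF_2^n$,
\[
\wt(h+c) \;=\; \wt(h) + \wt(c) - 2\wt(h \circ c),
\]
which follows from counting coordinates where $h$ and $c$ differ versus agree on a $1$.

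First I would unpack condition 1. Since $\calC$ is linear, the coset $H$ equals $h + \calC$, so parameterising $h' \in H$ as $h' = h + c$ for $c \in \calC$, the statement $\wt(h) \geq \wt(h')$ for every $h' \in H$ becomes $\wt(h) \geq \wt(h+c)$ for every $c \in \calC$. Applying the identity above, this rearranges to $2\wt(h \circ c) \geq \wt(c)$, i.e.\ $\wt(h \circ c) \geq \tfrac{1}{2}\wt(c)$, for every $c \in \calC$.

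Next I would unpack condition 2. By the definition of $\tfrac{1}{2}$-thickness (applied to the set whose indicator vector is $h$), the string $h$ is $\tfrac{1}{2}$-thick with respect to $\calC$ precisely when $\wt(c_H) \geq \tfrac{1}{2}\wt(c)$ for every $c \in \calC$, where $c_H$ denotes the projection of $c$ onto the support of $h$. But $\wt(c_H) = \wt(c \circ h)$ by the discussion in the Preliminaries, so condition 2 unfolds to exactly the same inequality $\wt(h \circ c) \geq \tfrac{1}{2}\wt(c)$ for all $c \in \calC$.

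Comparing the two unfoldings, they are identical, which gives the equivalence. There is essentially no obstacle here; the content of the lemma is the observation that the weight-of-a-sum identity turns a statement about maximising weight in a coset into a statement about preserving half the weight of every codeword under Schur product. I would present this as a short two-line chain of equivalences after stating the identity, with a brief note that linearity of $\calC$ is what lets us pass between $h' \in H$ and $c \in \calC$.
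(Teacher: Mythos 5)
Your proof is correct, and it is a cleaner route to the same underlying fact. The paper proves the two implications separately, each by contradiction: it supposes $h^*$ is of maximal weight but some $c$ has $\wt(c\circ h^*) < \tfrac12\wt(c)$, observes informally that adding $c$ ``turns more coordinates to one than to zero'' so $\wt(h^*+c) > \wt(h^*)$, and derives a contradiction (and symmetrically for the converse). Your proof instead makes the weight bookkeeping explicit through the identity $\wt(h+c) = \wt(h) + \wt(c) - 2\wt(h\circ c)$ and then reads off both conditions as the single inequality $\wt(h\circ c) \geq \tfrac12\wt(c)$ for all $c\in\calC$, turning the lemma into a one-step chain of biconditionals rather than two contrapositive arguments. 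The mathematical content is identical---the paper's phrase ``adding $c$ turns more coordinates to one than to zero'' is precisely your identity in disguise---but your packaging makes the equivalence transparent at a glance, avoids the slight abuse of notation where the paper compares $c\circ h^*$ to $c\circ\bar{h^*}$ as though they were numbers, and shows cleanly that linearity of $\calC$ is what allows the reparameterisation of $H$ as $h+\calC$. Either version is acceptable; yours would be a mild improvement in exposition.
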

\begin{proof}
  \cite{GharanSahami2025} presents a proof that 1.\ implies 2., which we will briefly reproduce here for completeness. Suppose that $h^* \in H^*$ is a string of maximal weight among strings in $H$, and suppose that $h^*$ is not $\frac12$-thick with respect to $\calC$; that is, $\exists c \in \calC$ such that $\wt(c \circ h^*) < \frac{1}{2}\wt(c)$. Take the string $h' = h^* + c$, noting that $h' \in H$. We must have that $\wt(h') > \wt(h^*)$, since $c\circ h^* < c\circ \bar{h^*}$ --- adding $c$ turns more coordinates of $h^*$ to one than it turns to zero. This contradicts the assumption that $h^*$ is of maximal weight among strings in $H$; hence, $h^*$ is $\frac12$-thick with respect to $\calC$.

    Now, to show that 2.\ implies 1., suppose that $h \in H$ is a $\frac12$-thick string, and that there is some $h^* \in H$ such that $\wt(h^*) > \wt(h)$. Then consider the codeword $c = h^* + h$, with $c \in \calC$. Since $c + h = h^*$, and $\wt(h^*) > \wt(h)$, we must have that $c \circ h < c \circ \bar{h}$; hence, $\wt(c \circ h) < \frac{1}{2}\wt(c)$. Then $h$ is not $\frac12$-thick, a contradiction; any $\frac12$-thick string must be of maximal weight among its equivalence class.
\end{proof}

\begin{lemma}\label{lem:13eq}
    1.\ and 3.\ above are equivalent.
\end{lemma}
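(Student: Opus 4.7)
The plan is short because the equivalence amounts to applying a single algebraic identity in two directions. The key observation is that bitwise complementation is a weight-reversing bijection on $\FF_2^n$: for any $s \in \FF_2^n$ we have $\wt(\bar{s}) = n - \wt(s)$. Moreover, since $\bar{s} = s + \mathbf{1}$, complementation commutes with adding a codeword: for any $c \in \calC$,
\[\bar{h} + c \;=\; h + \mathbf{1} + c \;=\; (h+c) + \mathbf{1} \;=\; \overline{h + c}.\]
Combining these two facts, the Hamming distance from $\bar{h}$ to an arbitrary codeword $c$ is
\[\wt(\bar{h} + c) \;=\; \wt\bigl(\overline{h+c}\bigr) \;=\; n - \wt(h + c),\]
and in particular the distance from $\bar{h}$ to $\mathbf{0}$ is $n - \wt(h)$.

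First I would spell out the direction $1\Rightarrow 3$. Assuming $h$ has maximal weight in $H$, for every $c \in \calC$ the element $h+c$ lies in $H$, so $\wt(h) \geq \wt(h+c)$, hence $n - \wt(h) \leq n - \wt(h+c)$, which by the identity above says $\wt(\bar{h}) \leq \wt(\bar{h} + c)$; that is, $\mathbf{0}$ is a nearest codeword to $\bar{h}$.

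Next I would reverse the argument for $3\Rightarrow 1$. If $\mathbf{0}$ is a nearest codeword to $\bar{h}$, then $\wt(\bar{h}) \leq \wt(\bar{h} + c)$ for every $c \in \calC$, so by the same identity $n - \wt(h) \leq n - \wt(h+c)$, i.e.\ $\wt(h) \geq \wt(h+c)$. Since every element of $H$ has the form $h+c$ for some $c \in \calC$ (because $H = h + \calC$), this says exactly that $h$ is of greatest weight in $H$.

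I do not anticipate a real obstacle: the content of the lemma is entirely captured by noting that complementation is a distance-preserving involution on $\FF_2^n$ that sends the coset $H$ bijectively onto the coset $\mathbf{1} + H$ while reversing weights, so ``maximum-weight element of $H$'' and ``minimum-distance element of $\mathbf{1}+H$ from $\mathbf{0}$'' are two descriptions of the same element.
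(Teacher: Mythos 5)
Your proof is correct and uses the same key identity as the paper, namely $\wt(\bar h + c) = n - \wt(h+c)$, just phrased as a direct implication in both directions rather than as the paper's two proofs by contradiction. The mathematical content is identical.
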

\begin{proof}
    To show that 1.\ implies 3., take some $h^* \in H$ of maximal weight, and suppose that there exists some codeword $c \in \calC$ such that $\wt(c + \bar{h^*}) < \wt(\bar{h^*} + \mathbf{0}) = \wt(\bar{h^*})$. Then, noting that for any $a \in \FF_2^n$, $\wt(a) = n - \wt(\bar{a}) = n - \wt(a + \mathbf{1})$, we have that: 
    \[
        \wt(c + \bar{h^*}) < \wt(\bar{h^*})
        \quad\text{iff}\quad
        n - \wt(\mathbf{1} + c + \bar{h^*}) < n - \wt(h^*)
        \quad\text{iff}\quad
        \wt(c + h^*)> \wt(h^*).
    \]
    
    Hence $c + h^*$ is an element of $H$ of greater weight than $h^*$, a contradiction. More intuitively, if $\bar{h^*}$ is closer to $c$ than to $\mathbf{0}$, it must be the case that it shares more coordinates with $c$ than it has zero coordinates. Then $h^*$ must differ on more coordinates with $c$ than it has nonzero coordinates --- so $h^* + c$, the string consisting of all coordinates in which $h^*$ and $c$ differ, must be of higher weight than $h^*$ itself. So $\mathbf{0}$ must be a nearest codeword in $\calC$ to $\bar{h^*}$.

    To show that 3.\ implies 1., suppose that $h \in H$ is of less than maximal weight; that is, there is some $h^* \in H$ with $\wt(h^*) > \wt(h)$. Then $c = h^* + h$ is a codeword in $\calC$ which is closer to $\bar{h}$ than $\mathbf{0}$:

    \[
        \wt(\bar{h^*}) < \wt(\bar{h})
        \quad\text{iff}\quad
        \wt((h^* + h) + \bar{h}) < \wt(\bar{h})
        \quad\text{iff}\quad
        \wt(c + \bar{h})< \wt(\bar{h})
    \]

    So $\mathbf{0}$ is not a nearest codeword to $\bar{h}$; if $\mathbf{0}$ is a nearest codeword to some $\bar{h^*}$ with $h^* \in H$, $h^*$ is of maximal weight among elements of $H$.
\end{proof}

This concludes the proof of \cref{thm:structure}.

The fundamental property at hand here is that the operation of shifting by some constant string is an isometry; once the nearest (or furthest) codewords of one element $h$ of a coset $H$ have been determined, the nearest codewords of every other element $h'$ in $H$ are fully determined by the difference $h$ and $h'$. Thus, since any element in the coset $H$ must have some nearest codewords, there must be some elements of $H$ which have any particular codeword as their nearest codeword --- if $h \in H$ has $c$ as a nearest codeword, $h + (c + c')$ has $c'$ as a nearest codeword. The elements of $\mathbf{1} + H$ which are closest to $\mathbf{0}$ --- corresponding to the elements of $H$ which are furthest from $\mathbf{0}$ --- must therefore not merely be closer to $\mathbf{0}$ than any other element of $\mathbf{1} + H$, but also closer to $\mathbf{0}$ than they are to any other element of $\calC$. This allows us to extract from $H^*$ knowledge of the nearest codewords of every element of $\mathbf{1} + H$:

\begin{theorem}\label{thm:ncpeq}
    Let $H \in \FF_2^n/\calC$ be some equivalence class and $H^*$ be the set of strings of maximal weight in $H$. Consider any string $h \in H$ and any string $h^* \in H^*$. Their sum, $h + h^*$, is a codeword in $\calC$ of minimum Hamming distance from $\bar{h}$. Likewise, for any $h$, the codewords $C$ of minimum Hamming distance from $\bar{h}$ are of the form $h+h^*$ for some $h^* \in H^*$.
\end{theorem}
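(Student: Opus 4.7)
The plan is to reduce the claim to a routine weight calculation by using the involution $s \mapsto \bar s = s + \mathbf{1}$ together with the fact that the coset $H$ is a single orbit under translation by $\calC$. First I would fix any $h \in H$ and observe that the map $c \mapsto h + c$ is a bijection $\calC \to H$, since $h \in H$ and $H$ is a coset of $\calC$. Composing with the involution gives the key identity, valid for every $c \in \calC$:
\[
\wt(\bar h + c) \;=\; \wt(\mathbf{1} + h + c) \;=\; n - \wt(h + c).
\]
Thus minimizing $\wt(\bar h + c)$ over $c \in \calC$ is the same as maximizing $\wt(h + c)$ over $c \in \calC$, which in turn is the same as selecting the element $h + c$ from the maximum-weight layer $H^*$ of the coset.

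Both halves of the theorem now follow directly. For the first half, given any $h^* \in H^*$, setting $c := h + h^* \in \calC$ gives $\wt(\bar h + c) = n - \wt(h^*)$, which is the smallest possible value of $\wt(\bar h + c')$ over $c' \in \calC$ by the definition of $H^*$, so $c$ is a nearest codeword to $\bar h$. For the converse, if $c \in \calC$ is a nearest codeword to $\bar h$, then by the same identity $\wt(h + c)$ is maximal among $\{\wt(h') : h' \in H\}$, so $h^* := h + c$ lies in $H^*$, and then $c = h + h^*$ exhibits the required form.

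The only subtlety is keeping track of which vectors live in which affine space --- the coset $H$ versus the code $\calC$ --- but this is handled transparently by the bijection $c \leftrightarrow h + c$. In essence, this theorem is \cref{thm:structure} transported along the isometry given by translation, exactly as the paragraph preceding the statement foreshadows: the special case $h = h^*$ (where $\mathbf{0}$ is a nearest codeword to $\bar{h^*}$) upgrades to the general case because Hamming distance is translation-invariant. I do not foresee any genuine obstacle; once the weight identity above is written down, both directions are a one-line unfolding of definitions.
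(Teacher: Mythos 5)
Your proof is correct and takes essentially the same approach as the paper's: both hinge on the identity $\wt(\bar h + c) = n - \wt(h + c)$ and the observation that $c \mapsto h + c$ bijects $\calC$ onto $H$, so that minimizing distance to $\bar h$ corresponds to maximizing weight in $H$. Your presentation is slightly more streamlined (a direct argument rather than the paper's contradiction framing for the first direction), but the underlying idea is the same, and you even anticipate the paper's alternative remark about transporting along the translation isometry.
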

\begin{proof}
    Take any $h \in H$ and $h^* \in H^*$, and suppose that there exists some string $c \in \calC$ such that $\wt(\bar{h}+c) < \wt(\bar{h}+(h+h^*))$. Then:
    
    \[
        n - \wt(h + c) < n - \wt(h^*)
        \quad\text{iff}\quad
        \wt(h + c)>\wt(h^*).
    \]
    
    Since $h + c$ is in $H$, and elements of $H^*$ have maximum weight among elements of $H$, this is a contradiction; no such $c$ can exist. Furthermore, $\wt(\bar{h} + (h + h^*)) = n - \wt(h^*)$, which is the same quantity for all $h^* \in H$; hence, all codewords of the form $h+h^*$ have equal and minimal Hamming distance from $\bar{h}$. This proves the first direction. For the second direction, suppose that there exists a codeword $c in \calC$ of minimum Hamming distance from $\bar{h}$; that is, $\wt(\bar{h} + c) \leq \wt(\bar{h} + (h+h^*))$ for any $h^* \ in H^*$. By the first direction, this is an equality; $\wt(\bar{h} + c) = \wt(\bar{h} + (h + h^*))$, since if $\wt(\bar{h} + c) < \wt(\bar{h} + (h + h^*))$ then $h + h^*$ would not be a nearest codeword to $\bar{h}$. Then $\wt(h + c) = \wt(h^*)$; as $h + c \in H$, it must be the case that $h + c \in H^*$. So $c = h + (h + c) = h + h^*$ for some $h^* \in H^*$. This proves the second direction.

    Alternatively, we may observe that since $\mathbf{0}$ is a nearest codeword to $\bar{h^*}$, $\mathbf{0}+c = c$ must be a nearest codeword to $\bar{h^*} + c$; that is, if $c = h^* + h$, then $c$ is a nearest codeword to $\bar{h^*} + (h^* + h) = \bar{h}$. The inverse direction follows straightforwardly by reversing the argument.
\end{proof}

This also demonstrates that, for any elements $h, h'$ in some coset $H$ of $\FF_2^n/\calC$, the nearest codewords in $\calC$ to $h$ are the same distance from $h$ as the nearest codewords to $h'$. We can therefore talk about the ``distance'' of a coset from the code; the distance of a coset $H$ to $\calC$ is the distance from any element of $H$ to its nearest codewords in $\calC$. Since $\mathbf{0}$ is always a codeword in any linear code, and the elements of minimum weight in a given coset must have $\mathbf{0}$ as a nearest codeword, the distance of any coset is equal to the minimum weight among elements in that coset, $\minwt(H)$.

Since the $\frac12$-thick strings in a coset directly correspond to the nearest codewords of elements in that coset, we have the following:

\begin{corollary}\label{cosetheavynp}
    Given a linear code $\calC\subseteq \FF^n$,  and integer $k$, the problem of determining whether a string in a given coset of $\calC$ in $\FF_2^n$ exists of weight at least $k$, or equivalently whether the weight of the $\frac12$-thick elements in a given coset are at least $k$, is $\NP$-complete, under deterministic Karp reductions.
\end{corollary}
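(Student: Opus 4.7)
The plan is to first observe that the two formulations in the statement are equivalent by \cref{thm:structure}: the $\frac12$-thick elements of a coset $H$ are exactly the elements of maximum weight in $H$, so the maximum weight achieved on $H$ coincides with the common weight of all $\frac12$-thick elements in $H$. After this reduction of formulations, the remaining work is to show NP-completeness of the decision problem ``given a coset $H\subseteq \FF_2^n$ and threshold $k$, does some $x\in H$ satisfy $\wt(x)\geq k$?''.

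NP-membership is immediate: a certificate is any codeword $c\in\calC$, from which we compute $h+c$ (where $h$ is the given coset representative) and check $\wt(h+c)\geq k$ in polynomial time.

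For NP-hardness, I would give a deterministic Karp reduction from $\NCP$, invoking the first item of \cref{thm:ncphard} in the exact form stated (which is NP-hard under deterministic Karp reductions by the sequence of results cited in the paper). Given an $\NCP$ instance consisting of a linear code $\calC\subseteq\FF_2^n$, a received word $s\in\FF_2^n$, and an integer $t$, output the coset $H = \bar{s}+\calC$ (represented by the shift $\bar{s}$ together with the generators of $\calC$) and the threshold $k = n - t$. The reduction is manifestly polynomial-time. Correctness follows from the identity $\wt(\bar{s}+c) = n - \wt(s+c)$ for every $c\in\FF_2^n$, which yields
\[
\max_{x\in H} \wt(x) \;=\; \max_{c\in\calC} \wt(\bar{s}+c) \;=\; n - \min_{c\in\calC} \wt(s+c).
\]
Hence some $x\in H$ has $\wt(x)\geq n-t$ if and only if some $c\in\calC$ has $\wt(s+c)\leq t$, so YES- and NO-instances are preserved exactly.

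There is no substantive obstacle: this corollary is essentially a packaging step combining the structural equivalence of \cref{thm:structure} with the classical isometric duality between ``far from $\mathbf{0}$ in the coset $\bar{s}+\calC$'' and ``close to $s$ in $\calC$''. The only points to be careful about are (i) ensuring the reduction is many-one rather than Turing (it is, since both the coset representative $\bar{s}$ and the threshold $n-t$ are computed directly from the input in polynomial time), and (ii) citing an NP-hardness result for $\NCP$ that holds under deterministic Karp reductions, as the paper already does in \cref{thm:ncphard}.
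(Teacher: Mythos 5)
Your proposal is correct and follows essentially the same route as the paper: both establish NP-membership via a weight-$\geq k$ certificate in the coset, and both give the deterministic Karp reduction from $\NCP$ that maps $(M, s, k)$ to the coset of $\bar{s}$ with threshold $n-k$, with correctness resting on the identity $\wt(\bar{s}+c) = n - \wt(s+c)$. Your write-up makes the max/min duality slightly more explicit, but there is no substantive difference from the paper's argument.
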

\begin{proof}
This problem is obviously in $\NP$; we can certify any YES instance with a string of weight at least $k$ in the given coset. We show $\NP$-hardness by reduction from the nearest codeword problem, as defined in \cref{sec:intro}.

Let {\CosetHeavy} be the problem above:

\begin{quote}
    \underline{Heaviest Element Problem (\CosetHeavy)}\\
    \textbf{Instance:} A generator matrix $M$ for a linear code $\calC$, a string $h$ in an equivalence class $H \in \FF_2^n/\calC$, and an integer $k$.\\
    \textbf{Output}: (YES) if there exists $h^* \in H$ such that $\wt(h^*) \geq k$, and (NO) otherwise.
\end{quote}

We define the mapping from {\NCP} to {\CosetHeavy} instances as follows:

\[(M, s, k) \rightarrow (M, \bar{s}, n-k)\]

This map can obviously be computed in polynomial time.

Suppose that the {\NCP} instance $(M, s, k)$ is a (YES) instance; there exists some $c \in \calC$, with $\calC$ the linear code generated by $M$, such that $\wt(c + s) \leq k$. Then, specifically, the distance between $c + s$ and $\mathbf{0}$ --- $\wt(c + s + \mathbf{0})$ --- is also at most $k$. So, following \cref{thm:ncpeq}, there exists a string in the coset containing $s$ of $\calC$ in $\FF_2^n$ with weight at most $k$; there exists a string in the coset containing $\bar{s}$ with weight at least $n - k$. So the {\CosetHeavy} instance $(M, \bar{s}, n-k)$ is a (YES) instance.

Now suppose that the {\CosetHeavy} instance is a (YES) instance. Then there is a string in the coset $H$ containing $s$ with weight at most $k$, and for any element of $H$ there is a codeword in $\calC$ of distance at most $k$; the {\NCP} instance is a (YES) instance.

Thus, {\NCP} $\leq_p$ {\CosetHeavy}, and {\CosetHeavy} is $\NP$-complete.
\end{proof}

Note that this reduction is in fact surjective, and ${\NCP} \equiv_p {\CosetHeavy}$.

\section{Optimal One-Sided Sparsifiers (Proof of Main Theorems)}\label{sec:main}
Of course, in order to sparsify a code, we are not actually interested in finding the $\frac12$-thick strings among a particular coset. Instead,  we are interested in finding the $\frac12$-thick strings among all strings in $\FF_2^n$. Specifically, we are interested in the problem of finding the smallest strings which are $\frac12$-thick. We call a string an \textit{optimal} $\alpha$-thick (thin) string with respect to a code $\calC$ if it is $\alpha$-thick (thin) with respect to $\calC$ and it is of least (resp.\ greatest) weight among all $\alpha$-thick (thin) strings. Similarly, we call a set an optimal $\alpha$-thick set if it is of smallest size among such sets.

\begin{corollary}
    The optimal $\frac12$-thick strings with respect to a code $\calC$ are exactly the complements of the strings of greatest weight which have $\mathbf{0}$ as a nearest codeword in $\calC$.
\end{corollary}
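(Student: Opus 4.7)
The plan is to apply the structure theorem (\cref{thm:structure}) directly, since the corollary is essentially a two-step rephrasing. First I would invoke the equivalence of conditions 2 and 3 in \cref{thm:structure}: a string $s \in \FF_2^n$ is $\frac12$-thick with respect to $\calC$ if and only if $\mathbf{0}$ is a nearest codeword in $\calC$ to $\bar{s}$. Thus, via the involution $s \mapsto \bar{s}$ on $\FF_2^n$, the set of $\frac12$-thick strings is in bijection with the set $T := \{t \in \FF_2^n : \mathbf{0} \text{ is a nearest codeword of } t\}$.

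Next I would translate the minimality condition across this bijection. Since $\wt(s) + \wt(\bar{s}) = n$, minimizing $\wt(s)$ over all $\frac12$-thick strings is identical to maximizing $\wt(\bar{s})$ over all elements $\bar{s} \in T$. Therefore the optimal $\frac12$-thick strings are precisely the complements of the strings $t \in T$ of maximum weight, which is exactly the statement of the corollary.

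There is no real obstacle here; the work was already done in \cref{thm:structure}. The only thing worth flagging in the write-up is that the corollary speaks of \emph{the} optimal strings (plural) rather than a unique minimizer, since both the set of $\frac12$-thick strings and the set $T$ are typically larger than one element, and complementation is weight-reversing, so the argument preserves the full set of optima on either side.
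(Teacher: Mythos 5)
Your proof is correct and matches the paper's argument essentially step for step: both invoke the equivalence of conditions 2 and 3 in \cref{thm:structure} to identify $\frac12$-thick strings with complements of strings having $\mathbf{0}$ as a nearest codeword, and both use $\wt(s)+\wt(\bar s)=n$ to turn minimality of weight on one side into maximality on the other. Nothing to flag.
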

\begin{proof}
By \cref{thm:structure}, the $\frac12$-thick strings with respect to a code $\calC$ are exactly the complements of the strings which have $\mathbf{0}$ as a nearest codeword --- the $\frac12$-thin strings with respect to $\calC$. Further, the $\frac12$-thick strings of least weight are the complements of the $\frac12$-thin strings of greatest weight.
\end{proof}

In fact, since the distance to the nearest codeword is constant among all elements of a given coset of $\calC$ in $\FF_2^n$, the cosets containing the optimal $\frac12$-thin strings are those where every element is of greatest distance to their nearest codewords. The cosets containing the optimal $\frac12$-thick strings, then, are cosets obtained by adding $\mathbf{1}$ to the cosets containing the optimal $\frac12$-thin strings (note that in a code containing the codeword $\mathbf{1}$, the cosets containing the optimal $\frac12$-thin strings are the same as the cosets containing the optimal $\frac12$-thick strings).

\thmmain*
\begin{proof}
We demonstrate this by polynomial-time Turing reduction from {\NCP}. We restate the problem for convenience:

\begin{quote}
    \underline{Minimal One-Sided $\frac12$-Sparsifier Problem ($\OptHalfSparsifier$)}\\
    \textbf{Instance:} A linear code $\calC$ given by its generators.\\
    \textbf{Output:} A set $S \subseteq [n]$ such that:
    \begin{itemize}
    \item (Feasibility) for all $c \in \calC$, $\wt(c_S) \geq \frac{1}{2}c$;
    \item (Optimality) $S$ is of smallest size among all sets which satisfy the above.
    \end{itemize}
\end{quote}

and use the following algorithm:

\begin{algorithm}[H]
    \SetKwFunction{OHS}{ohs}
    \Input{A generator matrix $M$ for a linear code $\calC$;\\
    a received string $s \in \FF_2^n$;\\
    an integer $k$;\\
    a subroutine \OHS{$M'$} which solves {\OptHalfSparsifier}, with output in the form of the indicator vector of the produced set.}
    \Output{Whether the {\NCP} instance $(M, s, k)$ is a (YES) instance or a (NO) instance.}
    $M_0 \leftarrow M$\;
    $i \leftarrow 0$\;
    \nl\While{a decision has not been made}{\label{alg1:loop}
        Let $\calC_i$ denote the linear code generated by the matrix $M_i$.\;
        $h^*_i \leftarrow$ \OHS{$M_i$}\;
        \nl\If{$n - \wt(h^*_i) \leq k$}{\label{alg1:if1}
        Output (YES).\;
        }
        \nl\If{$\bar{h^*_i} + s \in \calC_i$}{\label{alg1:if2}
        Output (NO).\;
        }
        $a_i \leftarrow \bar{h^*_i} + s$\;
        $M_{i+1} \leftarrow (a_i \text{ concatenated to } M_i)$\;
        $i \leftarrow i + 1$\;
    }
    \caption{Solving {\NCP} using {\OptHalfSparsifier}}\label{alg1}
\end{algorithm}

This algorithm terminates in at most $n-\text{dim}(\calC)$ calls to the {\OptHalfSparsifier} subroutine, with $O(n-k)$ extra work: since $a_i \notin \calC_i$ for all $i$, the dimension of $\calC_{i+1}$ is one greater than the dimension of $\calC_i$ for all $i$. When the dimension of $\calC_i$ is $n$, then it must be the case that $s \in \calC_i$; thus, the loop at \autoref{alg1:loop} is run at most $n-\text{dim}(\calC)$ times.

It remains to show correctness. We wish to maintain the invariant that, for each $i$, the distance from $s$ to the nearest codeword in $\calC_i$ is no closer than the distance from $s$ to the nearest codeword in $\calC$. So, towards a proof by contradiction, suppose that there is some $i$ such that $\calC_{i+1}$ has an element nearer to $s$ than the nearest element in $\calC_i$. Note that since $\calC_{i+1} = \calC_i \cup (a_i + \calC_i)$, if $\calC_{i+1}$ has an element closer to $s$ than all elements in $\calC_i$, that element must be in $a_i + \calC_i$. So we have:

\begin{align*}
    &\minwt(s + C(M_i)) > \minwt(s + a_i + \calC_i)\\
    \text{iff}\quad &\minwt(s + C(M_i)) > \minwt(s + \bar{h^*_i} + s + \calC_i)\\
    \text{iff}\quad &\minwt(s + C(M_i)) > \minwt(\bar{h^*_i} + \calC_i)
\end{align*}

Where, for $S \subseteq \FF_2^n$, $\minwt(S)$ is the smallest weight among elements of $S$. Thus, the smallest weight among elements of the coset $s + \calC_i$ is larger than the smallest weight among elements of the coset $\bar{h^*_i} + \calC_i$. But then the largest element in the coset $\bar{s}+\calC_i$ is smaller than the largest element in the coset $h^*_i$. By \cref{thm:structure}, all largest elements of any coset are $\frac12$-thick, and the largest element of $\bar{s}+\calC_i$ is $\frac12$-thick; thus, $h^*_i$ cannot be a $\frac12$-thick string of minimal weight among all $\frac12$-thick strings, a contradiction. So $\calC_{i+1}$ cannot have any elements nearer to $s$ than the nearest elements in $\calC_i$; by induction, $\calC_i$ has no closer elements to $s$ than does $\calC$ for any $i$.

We can see this property more intuitively by noting that adding a basis element to a linear code combines pairs of cosets which differ by that element; the maximal (minimal) elements in the resulting coset will be the maximal (minimal) elements among the two. Thus, given that the coset containing $\bar{h^*}$ contains a largest minimal element among any coset, the coset $(\bar{h^*} + \calC_i) \cup (\bar{s}+\calC_i)$ of $\calC_{i+1}$ will contain the minimal element of $\bar{s}+\calC_i$ as a minimal element; repeatedly merging cosets in this manner never moves the coset containing $s$ ``closer'' to the code.

Given this invariant, we proceed to show that the two conditionals on \autoref{alg1:if1} and \autoref{alg1:if2} can only be satisfied if the instance is a yes or no instance respectively. We begin with \autoref{alg1:if1}. Since $\bar{h^*_i}$ is a string of greatest distance from any codeword in $\calC_i$, and has $\mathbf{0}$ as a nearest codeword, if $\wt(\bar{h^*_i}) = n - \wt(h_i^*) \leq k$, then the greatest distance from any string among elements in $\calC_i$ is at most $k$; thus, there must be a codeword in $\calC_i$ at least that close to $s$. By the invariant above, there also must be a codeword in $\calC$ which is that close, and the {\NCP} instance $(M, s, k)$ is a yes instance. To show that the conditional on \autoref{alg1:if2} is satisfied only for no instances, note that if $\bar{h_i^*} + s \in \calC_i$, then $s$ is in a coset with $\bar{h_i^*}$. This implies that the distance from $s$ to its nearest codewords in $\calC_i$ is the same as that of $\bar{h_i^*}$ to its nearest codewords; since $\bar{h_i^*}$ has $\mathbf{0}$ as a nearest codeword, the distance from $s$ to its nearest codeword is exactly $\wt(\bar{h_i^*}) = n - \wt(h_i^*)$. Since the conditional on \autoref{alg1:if1} was not satisfied, then, there exists no codeword in $\calC_i$ which has distance at most $k$ from $s$. Since $\calC_i \supseteq \calC$, there exists no codeword in $\calC$ of at most that distance, and the {\NCP} instance $(M, s, k)$ is a no instance.

We therefore have a polynomial-time Turing reduction from  {\NCP} to {\OptHalfSparsifier}, and {\OptHalfSparsifier} is $\NP$-hard.
\end{proof}

From known hardness of approximation results (\cite{bhattiproluGLR2025})  of $\NCP$, we can also derive similar results for the $\OptHalfSparsifier$ problem.

\thmapprox*
\begin{proof}
This follows directly from the algorithm above and the hardness of approximation of {\NCP} given by \cref{thm:ncphard}. Suppose that instead of an algorithm which solves {\OptHalfSparsifier}, we have an algorithm which solves the following approximation of {\OptHalfSparsifier}:
\begin{quote}
    \underline{$\gamma$-Approximate Minimal One-Sided $\frac12$-Sparsifier Problem ($\ApproxOHS_\gamma$)}\\
    \textbf{Instance:} A linear code $\calC$ given by its generators.\\
    \textbf{Output:} A set $S \subseteq [n]$ such that:
    \begin{itemize}
    \item (Feasibility) for all $c \in \calC$, $\wt(c_S) \geq \frac{1}{2}c$;
    \item (Approximate Optimality) if $S^*$ is a set which satisfies the above feasibility condition, then $\gamma \cdot|\bar{S}| \geq |\bar{S^*}|$
    \end{itemize}
\end{quote}

If $\gamma=1$, then this problem is $\OptHalfSparsifier$. Note that the multiplicative factor here is a constraint not on the size of the set itself, but the size of its conjugate; it is trivial to find a $\frac12$-thick set with a size within a factor of $2$ of that of the smallest $\frac12$-thick set, by simply taking every coordinate which is represented among codewords in $\calC$. It is easy to see from this that approximation up to a constant factor of the set size is trivial.

Given such a subroutine we can use \autoref{alg1} to approximate {\NCP}, solving the following problem:

\begin{quote}
\underline{Nearest Codeword Problem with Multiplicative Gap $\gamma$ ($\MultGapNCP_\gamma$)}\\
\textbf{Instance:} A generator matrix $M$ for a linear code $\calC$, a received string $s \in \FF_2^n$, and an integer $k$.\\
\textbf{Output}: (YES) if there exists $c \in \calC$ such that $\wt(c + s) \leq k$. (NO) if for every $c \in \calC$, $\wt(c + s) > \gamma k$.
\end{quote}

To show that this reduction goes through, we follow the previous proof with a small modification: instead of maintaining that the augmented codes $\calC_i$ each have no codewords closer to $s$ than the closest codewords in $\calC$, we maintain that each $\calC_i$ has no codewords outside of $\calC$ of distance smaller than $k$ to $s$. We note first that if $h^*_i + \calC_i$ has a distance from $\calC_i$ at most $k$, then we will have answered (YES) during that iteration of the loop at \autoref{alg1:loop}; we only proceed to add $h^*_i + s$ to the generator matrix if $h^*_i + \calC_i$ is further from $\calC_i$ than $k$. Thus, the set $(s + a_i + \calC_i) = (\bar{h^*_i} + \calC_i)$ has a distance from $\calC$ no smaller than $k$. Formally, $\minwt(s + a_i + \calC_i) = \minwt(\bar{h^*_i} + \calC_i) > k$.

The reduction from {$\MultGapNCP_\gamma$} to {$\ApproxOHS_\gamma$} then follows from the same arguments as in the proof of \cref{thm:main}. Briefly, if the {$\MultGapNCP_\gamma$} instance is a (YES) instance, then it will certainly never be the case that the distance from $s$ to $\calC_i$ is greater than $k$ for any $\calC_i$, so \autoref{alg1} will output (YES); if the {$\MultGapNCP_\gamma$} instance is a (NO) instance, then, given the invariant above, \autoref{alg1} may never output (YES), and must output (NO).
\end{proof}


\begin{thebibliography}{BGLR25}

\bibitem[AK14]{AustrinKhot}
Per Austrin and Subhash Khot.
\newblock A simple deterministic reduction for the gap minimum distance of code
  problem.
\newblock {\em IEEE Transactions on Information Theory}, 60(10):6636--6645,
  2014.

\bibitem[BGLR25]{bhattiproluGLR2025}
Vijay Bhattiprolu, Venkatesan Guruswami, Euiwoong Lee, and Xuandi Ren.
\newblock Inapproximability of finding sparse vectors in codes, subspaces, and
  lattices.
\newblock {\em arxiv.org/abs/2410.02636 {(to appear in FOCS)}}, 2025.

\bibitem[BL24]{BhattiproluLee24}
Vijay Bhattiprolu and Euiwoong Lee.
\newblock Inapproximability of sparsest vector in a real subspace.
\newblock {\em arXiv preprint arXiv:2410.02636}, 2024.

\bibitem[DMS03]{dumer2003approx}
Ilya Dumer, Daniele Micciancio, and Madhu Sudan.
\newblock Hardness of approximating the minimum distance of a linear code.
\newblock {\em IEEE Transactions on Information Theory}, 49(1):22--37, 2003.

\bibitem[GS25]{GharanSahami2025}
Shayan~Oveis Gharan and Arvin Sahami.
\newblock Unweighted one-sided code sparsifiers and thin subgraphs.
\newblock {\em arXiv:arXiv:2502.02799}, 2025.

\bibitem[Kar94]{Karger94}
David~R. Karger.
\newblock Using randomized sparsification to approximate minimum cuts.
\newblock In Daniel~Dominic Sleator, editor, {\em SODA}, pages 424--432.
  {ACM/SIAM}, 1994.

\bibitem[Kho05]{Khot05}
Subhash Khot.
\newblock Hardness of approximating the shortest vector problem in lattices.
\newblock {\em J. {ACM}}, 52(5):789--808, 2005.

\bibitem[KK15]{KoganK15}
Dmitry Kogan and Robert Krauthgamer.
\newblock Sketching cuts in graphs and hypergraphs.
\newblock In Tim Roughgarden, editor, {\em ITCS}, pages 367--376. {ACM}, 2015.

\bibitem[KPS24]{KhannaPS24}
Sanjeev Khanna, Aaron~(Louie) Putterman, and Madhu Sudan.
\newblock Code sparsification and its applications.
\newblock In {\em SODA}, pages 5145--5168, 2024.

\bibitem[Mic14]{Micciancio2014}
Daniele Micciancio.
\newblock Locally dense codes.
\newblock In {\em 2014 IEEE 29th Conference on Computational Complexity (CCC)},
  pages 90--97. IEEE, 2014.

\bibitem[Var97]{vardy1997intractability}
Alexander Vardy.
\newblock The intractability of computing the minimum distance of a code.
\newblock {\em IEEE Transactions on Information Theory}, 43(6):1757--1766,
  1997.

\end{thebibliography}
\end{document}